\newtheorem{theorem}{Theorem}[section]
\newtheorem{corollary}[theorem]{Corollary}
\newtheorem{lemma}[theorem]{Lemma}
\newtheorem{claim}{Claim}
\theoremstyle{definition}
\DeclareMathOperator{\val}{Val}
\DeclareMathOperator{\opt}{OPT}
\newcommand{\N}{\mathbb{N}}
\newcommand{\mbb}{\text{MBB}}
\newcommand{\ef}{\text{EF}}
\newcommand{\Mod}[1]{\ (\mathrm{mod}\ #1)}
\newcommand{\cA}{\mathcal{A}}
\title{\bf On the Complexity of Fair House Allocation}
\author[1]{Naoyuki Kamiyama}
\author[2]{Pasin Manurangsi}
\author[3]{Warut Suksompong}
\affil[1]{Institute of Mathematics for Industry, Kyushu University, Japan}
\affil[2]{Google Research, USA}
\affil[3]{School of Computing, National University of Singapore, Singapore}
\date{\vspace{-10mm}}
\begin{document}

\maketitle

\begin{abstract}
We study fairness in house allocation, where $m$ houses are to be allocated among $n$ agents so that every agent receives one house.
We show that maximizing the number of envy-free agents is hard to approximate to within a factor of $n^{1-\gamma}$ for any constant $\gamma>0$, and that the exact version is NP-hard even for binary utilities.
Moreover, we prove that deciding whether a proportional allocation exists is computationally hard, whereas the corresponding problem for equitability can be solved efficiently.
\end{abstract}

\section{Introduction}

We consider the classical setting of \emph{house allocation}, also known as \emph{assignment} \citep{HyllandZe79,Zhou90,AbdulkadirogluSo98}.
In this setting, there are $m$ houses to be allocated among $n\le m$ agents, with no two agents sharing the same house.
The agents have possibly different preferences over the houses, and each agent should be assigned exactly one house.

While house allocation has typically been considered from the economic efficiency and strategyproofness perspectives \citep{AbrahamCeMa04,KrystaMaRa14}, another important concern is \emph{fairness}: it is desirable that the agents feel fairly treated.
For example, the prominent fairness notion of \emph{envy-freeness} means that agents do not envy one another with respect to their assigned houses.
When $m=n$, all of the houses must be assigned, so an agent is envy-free if and only if she receives one of her most preferred houses.
Thus, in order to compute an assignment with the largest number of envy-free agents, it suffices to find a maximum matching in the bipartite graph where the two sets of vertices correspond to the agents and the houses, respectively, and there is an edge between an agent and a house exactly when the house is among the agent's most preferred houses---it is well-known that this task can be done in polynomial time.
\citet{BeynierChGo19} assumed that agents can only envy other agents with whom they are acquainted according to a given acquaintance network, and provided algorithms and hardness results for various networks when $m=n$.
\citet{GanSuVo19} addressed the general setting with $m\ge n$ (without an acquaintance network).
In this setting, a simple matching algorithm no longer suffices, since even when all agents prefer the same house, it may still be possible to achieve envy-freeness by not allocating this house.
Gan et al.~devised a polynomial-time algorithm that decides whether an envy-free assignment exists and, if so, computes one such assignment.
However, their work left open the question of whether an assignment maximizing the number of envy-free agents can be computed efficiently---after all, when making all agents envy-free is impossible, the number of envy-free agents is a natural optimization objective.

In this note, we give a strong negative answer to the question above by showing that under well-known complexity-theoretic assumptions, perhaps surprisingly, it is hard not only to maximize the number of envy-free agents, but also to obtain any decent approximation thereof.
Specifically, assuming the \emph{Small Set Expansion Hypothesis} \citep{RaghavendraSt10}, the problem is hard to approximate to within a factor of $n^{1-\gamma}$ for any constant $\gamma > 0$.
We also establish that even when the agents have binary utilities over the houses, maximizing the number of envy-free agents is NP-hard.
In addition, we consider two other important fairness notions: proportionality and equitability.
On the one hand, we show that deciding whether a proportional allocation exists is NP-hard, thereby drawing a sharp contrast to the envy-freeness result of \citet{GanSuVo19}; on the other hand, we prove that the corresponding problem for equitability can be solved efficiently.

\section{Preliminaries}
\label{sec:prelim}

Let $[k]$ denote the set $\{1,2,\dots,k\}$ for any positive integer $k$.
In the house allocation setting, there is a set $A = \{a_1,\dots,a_n\}$ of $n$ agents and a set $H=\{h_1,\dots,h_m\}$ of $m \geq n$ houses.
Each agent $a\in A$ has a utility $u_a(h)\ge 0$ for a house $h\in H$.
The utilities are said to be \emph{binary} if $u_a(h)\in\{0,1\}$ for all $a\in A$ and $h\in H$.
As \emph{assignment} or \emph{house allocation} is an injection $\phi: A \to H$.
We consider the following fairness properties of assignments:
\begin{itemize}
\item An agent $a\in A$ is said to be \emph{envy-free} in an assignment $\phi$ if $u_a(\phi(a)) \ge u_a(\phi(a'))$ for all $a'\in A$.
For an assignment $\phi$, denote by $\val(\phi)$ the number of envy-free agents in $\phi$.
The assignment $\phi$ is called \emph{envy-free} if $\val(\phi) = n$.
\item An agent $a\in A$ is said to be \emph{proportional} in an assignment $\phi$ if $u_a(\phi(a)) \ge \frac{1}{n}\sum_{a'\in A}u_a(\phi(a'))$.
An assignment $\phi$ is called \emph{proportional} if all $n$ agents are proportional.
\item An assignment $\phi$ is called \emph{equitable} if $u_a(\phi(a)) = u_{a'}(\phi(a'))$ for all agents $a,a'\in A$.
\end{itemize}
Notice that for envy-freeness, it suffices to consider the agents' ordinal rankings over the houses, whereas the cardinal utilities play an important role in the definitions of proportionality and equitability.
All three notions are commonly studied in the unconstrained allocation setting where each agent can receive any number of items \citep{BouveretChMa16,Markakis17}. However, to the best of our knowledge, the latter two notions have not been previously studied in house allocation.

\section{Envy-Freeness}

We begin by considering envy-freeness.
Recall that \citet{GanSuVo19} gave a polynomial-time algorithm for deciding whether an envy-free assignment exists for any given instance.
We show that their algorithm cannot be generalized to efficiently compute the maximum number of envy-free agents, or even any decent approximation thereof, provided that known complexity-theoretic assumptions hold.
We refer to our problem of interest as {\normalfont \scshape Maximum Envy-Free Assignment}.

To obtain the hardness of approximation, we will reduce from the so-called {\normalfont \scshape Maximum Balanced Biclique (MBB)}  problem.
In this problem, we are given a bipartite graph $G = (L, R, E)$, and the goal is to find a balanced complete bipartite subgraph (i.e., $K_{p, p}$ for some $p \in \N$, also called a balanced \emph{biclique}) of $G$ with the maximum number of vertices. We use $\opt_{\mbb}(G)$ to denote the largest $p$ such that $K_{p, p}$ is a subgraph of $G$, and $N$ to denote the number of left vertices of $G$ (i.e., $N = |L|$).
We claim that any approximation algorithm for {\normalfont \scshape Maximum Envy-Free Assignment} can be turned into an approximation algorithm for {\normalfont \scshape MBB}, with a multiplicative loss of roughly $2$ in the approximation ratio.

\begin{theorem} \label{thm:max-ef-assignment-hardness}
For any constant $\varepsilon > 0$, if there exists a polynomial-time $f(n)$-approximation algorithm for {\normalfont \scshape Maximum Envy-Free Assignment}, then there is a polynomial-time $2(1 + \varepsilon)\cdot f(N)$-approximation algorithm for {\normalfont \scshape Maximum Balanced Biclique}.
\end{theorem}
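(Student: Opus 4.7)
The plan is to reduce MBB to {\normalfont\scshape Maximum Envy-Free Assignment}. Given $G=(L,R,E)$ with $|L|=N$, I would construct in polynomial time an HA instance $I_G$ on $n=\Theta(N)$ agents such that (completeness) a balanced biclique $K_{p,p}\subseteq G$ induces an envy-free assignment of $I_G$ with $\val\ge p$, and (soundness) from any envy-free assignment of $I_G$ with $\val=V$ one can in polynomial time extract a balanced biclique in $G$ of size at least $V/(2(1+\varepsilon))$. Running the assumed $f(n)$-approximation algorithm $\mathcal{A}$ on $I_G$ and post-processing its output to extract a biclique then yields the claimed $2(1+\varepsilon)f(N)$-approximation for MBB.

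\textbf{Construction and completeness.} I would take agents $A$ corresponding to $L$ and houses derived from $R$ (possibly with auxiliary dummy houses so that $m\ge n$), with binary utilities in which each $l\in L$ has utility $1$ precisely on the houses representing her $G$-neighbors in $R$ and utility $0$ elsewhere. For a biclique $K_{p,p}\subseteq G$ with parts $P\subseteq L$ and $Q\subseteq R$, assigning each $P$-agent to a distinct $Q$-house---which is possible because the biclique is complete---makes these $p$ agents receive their maximum possible utility and hence be envy-free, no matter how the remaining agents are allocated. Thus $\opt(I_G)\ge\opt_{\mbb}(G)$.

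\textbf{Soundness.} Given an envy-free assignment $\phi$ with $\val(\phi)=V$, I would extract a balanced biclique in $G$ of size at least $V/(2(1+\varepsilon))$. The envy-free agents split into those of utility $1$, matched via $\phi$ to houses representing their $G$-neighbors, and those of utility $0$, whose envy-freeness additionally forbids any agent from being assigned a house in their $G$-neighborhood. The construction should be designed so that these combined restrictions pin down a large ``top'' set of houses shared by the matched agents; this is the crucial step that upgrades a pure matching (which envy-freeness directly yields) into a biclique. A factor of $2$ naturally arises when one insists on a \emph{balanced} biclique, since one must discard at most half of the candidate vertices in order to equalise the two sides; an additional $(1+\varepsilon)$ slack is absorbed by guessing the target biclique size on a geometric $(1+\varepsilon)$-grid.

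\textbf{Putting the bounds together.} Applying $\mathcal{A}$ to $I_G$ produces $\phi$ with $\val(\phi)\ge\opt(I_G)/f(n)\ge\opt_{\mbb}(G)/f(n)$, and the soundness extraction then yields a balanced biclique in $G$ of size at least $\val(\phi)/(2(1+\varepsilon))\ge\opt_{\mbb}(G)/(2(1+\varepsilon)f(n))$. Because $n=\Theta(N)$, the difference between $f(n)$ and $f(N)$ can be absorbed in the $(1+\varepsilon)$ factor (e.g., by taking $f$ monotone without loss of generality, or by padding $G$), so the final approximation ratio for MBB is at most $2(1+\varepsilon)f(N)$.

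\textbf{Main obstacle.} The soundness step is the heart of the argument. Envy-freeness with binary utilities directly certifies only a matching-like structure in the agent-house preference graph, and a matching need not contain any large biclique in $G$ (for instance, a perfect matching in $G$ would otherwise trivialise the reduction). The construction must therefore be rigged---most plausibly through a careful choice of which houses the envy-free agents are forced to use and of dummy houses that rule out ``matching-only'' solutions---so that any envy-free assignment of large value compels the participating agents' top houses to coincide on a common dense set. Verifying this upgrade, together with the pigeonhole balancing between the two sides of $G$ (which is where the explicit factor of $2$ enters), will be where the bulk of the technical effort goes.
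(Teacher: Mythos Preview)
Your high-level plan (reduce, prove completeness and soundness with a factor-$2$ loss, wrap with $(1+\varepsilon)$ slack) matches the paper, but the actual reduction you propose does not work, and you essentially admit as much in your ``Main obstacle'' paragraph without closing the gap.

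With your binary construction---agent $l$ likes house $r$ iff $(l,r)\in E$---soundness fails outright: if $G$ has a perfect matching, assigning each agent to her matched neighbor makes \emph{all} $N$ agents envy-free, yet $G$ may contain no nontrivial balanced biclique. Your remark that utility-$0$ envy-free agents forbid others from using their neighborhoods does not rescue this, since nothing forces any envy-free agent to have utility $0$. The paper's fix is precisely what your proposal lacks: it uses \emph{non-binary} utilities in which an agent's value for a \emph{non-neighbor} house $h_j$ is $N+j$ (strictly larger than the value $N$ for any neighbor), with distinct values across $j$. Then if envy-free agents $a_{i_\ell}$ and $a_{i_{\ell'}}$ occupy houses $h_{j_\ell}$ and $h_{j_{\ell'}}$ with $j_\ell<j_{\ell'}$, agent $a_{i_\ell}$ can avoid envying the owner of $h_{j_{\ell'}}$ only if $(b_{i_\ell},c_{j_{\ell'}})\in E$. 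Sorting the $k$ envy-free agents by their house indices and taking the first $\lfloor k/2\rfloor$ agents versus the last $\lfloor k/2\rfloor$ houses therefore yields a genuine balanced biclique---this is the missing idea.

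Two smaller points. First, the paper arranges $|A|=|L|$ exactly, so no ``absorb $f(n)$ vs.\ $f(N)$ into $(1+\varepsilon)$'' hand-waving is needed; your $n=\Theta(N)$ plus monotonicity/padding argument is shakier than you suggest, since $f$ is an arbitrary function. Second, the $(1+\varepsilon)$ in the paper does not come from geometric guessing but from brute-forcing all bicliques of size at most a constant $\beta=2(1+1/\varepsilon)$, which handles the additive $-1$ loss in $\lfloor k/2\rfloor$ when $k$ is small.
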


While {\normalfont \scshape Maximum Balanced Biclique} is known to be NP-hard \citep{GareyJo79}, the NP-hardness of approximating it remains open.
Nevertheless, several inapproximability results for the problem are known under different complexity-theoretic assumptions~\citep{Feige02,Khot06,BhangaleGaHa16,Manurangsi17,Manurangsi17-2}. 
Specifically, assuming that NP cannot be solved in subexponential time (i.e., $NP \nsubseteq \bigcap_{\delta > 0} BPTIME(2^{n^{\delta}})$), our theorem together with the hardness result of \citet{Khot06} implies that {\normalfont \scshape Maximum Envy-Free Assignment} is hard to approximate to within a factor of $n^{\gamma}$ for some constant $\gamma > 0$. 
Furthermore, combining our theorem with the hardness of \citet{Manurangsi17-2}, we can deduce that {\normalfont \scshape Maximum Envy-Free Assignment} is hard to approximate to within a factor of $n^{1 - \gamma}$ for any constant $\gamma > 0$---this assumes the so-called Small Set Expansion Hypothesis~\citep{RaghavendraSt10}, which is itself a strengthening of the seminal Unique Games Conjecture~\citep{Khot02}. 
This $n^{1 - \gamma}$ inapproximability ratio nearly matches an $n$-approximation, which can be trivially achieved by ensuring that a single agent is envy-free.

\begin{corollary}
If $NP \nsubseteq \bigcap_{\delta > 0} BPTIME(2^{n^{\delta}})$, then, for some constant $\gamma > 0$, {\normalfont \scshape Maximum Envy-Free Assignment} cannot be approximated to within a factor of $n^{\gamma}$ in polynomial time. 
\end{corollary}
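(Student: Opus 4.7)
The plan is to obtain the corollary by directly composing Theorem~\ref{thm:max-ef-assignment-hardness} with the inapproximability of {\scshape Maximum Balanced Biclique} due to \citet{Khot06}. Under the hypothesis $NP \nsubseteq \bigcap_{\delta > 0} BPTIME(2^{n^{\delta}})$, Khot's result supplies a constant $\gamma_0 > 0$ such that no polynomial-time algorithm approximates {\scshape MBB} to within a factor of $N^{\gamma_0}$.

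I would then set $\gamma := \gamma_0/2$ and argue by contradiction. Suppose there is a polynomial-time $n^{\gamma}$-approximation algorithm $\cA$ for {\scshape Maximum Envy-Free Assignment}. Applying Theorem~\ref{thm:max-ef-assignment-hardness} with $f(n) = n^{\gamma}$ and any fixed constant, say $\varepsilon = 1$, converts $\cA$ into a polynomial-time approximation algorithm for {\scshape MBB} whose ratio is $4 \cdot N^{\gamma} = 4 \cdot N^{\gamma_0/2}$. Since $4 \cdot N^{\gamma_0/2} < N^{\gamma_0}$ for all $N$ exceeding some absolute constant $N_0$, this yields an $N^{\gamma_0}$-approximation on all large enough inputs; instances with $N \le N_0$ involve only boundedly many vertices and can be solved exactly in constant time by brute force. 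Gluing the two regimes gives a polynomial-time $N^{\gamma_0}$-approximation for {\scshape MBB}, contradicting Khot's lower bound.

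No step here looks genuinely difficult, since Theorem~\ref{thm:max-ef-assignment-hardness} already matches the hardness parameter of {\scshape MBB} (the number of left vertices $N$) with the parameter of {\scshape Maximum Envy-Free Assignment} (the number of agents $n$), so the composition is clean. The only points requiring mild care are choosing $\gamma$ strictly below $\gamma_0$ so that the multiplicative loss of $2(1+\varepsilon)$ in the reduction is absorbed into the exponent gap, and handling small instances trivially so that the derived approximation ratio holds uniformly. If anything subtle arises, it would be confirming that Khot's hardness is stated against an $N^{\gamma_0}$-factor where $N$ is polynomially related to the parameter used in Theorem~\ref{thm:max-ef-assignment-hardness}; this is standard for biclique-type reductions and presents no real obstacle.
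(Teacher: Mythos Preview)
Your proposal is correct and matches the paper's approach: the paper does not give a standalone proof of this corollary but simply notes in the surrounding discussion that it follows by combining Theorem~\ref{thm:max-ef-assignment-hardness} with Khot's hardness for {\scshape MBB}, which is exactly the composition you spell out. Your handling of the constant loss $2(1+\varepsilon)$ by halving the exponent and brute-forcing small instances is the standard way to make this precise.
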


\begin{corollary}
If the Small Set Expansion Hypothesis holds, then {\normalfont \scshape Maximum Envy-Free Assignment} is NP-hard to approximate to within a factor of $n^{1 - \gamma}$ for any constant $\gamma > 0$. 
\end{corollary}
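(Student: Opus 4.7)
The plan is to derive the corollary by composing Theorem~\ref{thm:max-ef-assignment-hardness} with the SSEH-based inapproximability of {\normalfont \scshape MBB} due to \citet{Manurangsi17-2}, which rules out, under SSEH, any polynomial-time $N^{1 - \gamma'}$-approximation for {\normalfont \scshape MBB} for every constant $\gamma' > 0$. Schematically, the argument is a contrapositive: a sufficiently good approximation for {\normalfont \scshape Maximum Envy-Free Assignment} would, through Theorem~\ref{thm:max-ef-assignment-hardness}, yield an approximation for {\normalfont \scshape MBB} contradicting Manurangsi's bound.

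Concretely, I would fix the arbitrary constant $\gamma > 0$ for which the conclusion is to be shown, and assume for contradiction that {\normalfont \scshape Maximum Envy-Free Assignment} admits a polynomial-time $n^{1-\gamma}$-approximation. Plugging $f(n) = n^{1-\gamma}$ and, say, $\varepsilon = 1$ into Theorem~\ref{thm:max-ef-assignment-hardness} immediately gives a polynomial-time algorithm that approximates {\normalfont \scshape MBB} to within a factor of $2(1+\varepsilon) \cdot N^{1-\gamma} = 4 N^{1-\gamma}$.

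The next step is to absorb the constant factor into the exponent so that \citet{Manurangsi17-2} can be invoked. Since $4 N^{1-\gamma} \le N^{1-\gamma/2}$ for all sufficiently large $N$, and instances below that threshold can be solved exactly in constant time, the supposed algorithm in fact yields a polynomial-time $N^{1-\gamma/2}$-approximation for {\normalfont \scshape MBB}. Applying \citet{Manurangsi17-2} with $\gamma' := \gamma/2$ then gives the desired contradiction, and since $\gamma$ was arbitrary the corollary follows.

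I do not anticipate any substantive obstacle: both ingredients are already available, and the composition is routine. The only item worth a quick check is that the reduction underlying Theorem~\ref{thm:max-ef-assignment-hardness} is a genuine polynomial-time many-one reduction, so that NP-hardness under SSEH transfers from {\normalfont \scshape MBB} to {\normalfont \scshape Maximum Envy-Free Assignment}; this is already implicit in how that theorem is phrased.
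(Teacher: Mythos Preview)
Your proposal is correct and follows exactly the route the paper indicates: the corollary is stated without a standalone proof and is derived simply by combining Theorem~\ref{thm:max-ef-assignment-hardness} with the SSEH-based $N^{1-\gamma'}$-inapproximability of {\normalfont \scshape MBB} from \citet{Manurangsi17-2}, precisely via the contrapositive and constant-absorption you describe.
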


Our main technical contribution is the following reduction from {\normalfont \scshape Maximum Balanced Biclique} to {\normalfont \scshape Maximum Envy-Free Assignment}, as formalized below.

\begin{lemma}[Reduction] \label{lem:thm:max-ef-assignment-reduction}
There is a polynomial-time reduction that takes an instance $G = (L, R, E)$ of {\normalfont \scshape Maximum Balanced Biclique} and produces an instance $(A, H, \{u_a\}_{a \in A})$ of {\normalfont \scshape Maximum Envy-Free Assignment} such that the following properties hold:
\begin{enumerate}
\item If $\opt_{\emph{MBB}}(G) \geq k$, then there exists an assignment $\phi^*$ such that $\val(\phi^*) \geq k$.
\item Given any assignment $\phi$ such that $\val(\phi) \geq k$, there is a polynomial-time algorithm that outputs $S \subseteq L$ and $T \subseteq R$ such that $|S|=|T|=\lfloor k/2\rfloor$, and $S$ and $T$ together induce a biclique in $G$.
\item $|A| = |L|$.
\end{enumerate}
\end{lemma}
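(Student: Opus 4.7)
The plan is to reduce an MBB instance $G = (L, R, E)$ to a Maximum Envy-Free Assignment instance via the following construction. Take $A = \{a_v : v \in L\}$, so that $|A| = |L|$ and Property 3 is immediate. The house set $H$ contains one house $h_r$ for each $r \in R$, together with auxiliary ``padding'' houses that ensure $|H| \geq |A|$ and control the flexibility of assignments. Utilities are binary: $u_{a_v}(h_r) = 1$ exactly when $(v,r) \notin E$ (so $a_v$ values its non-neighbors in $R$), and $u_{a_v} = 0$ on all neighbor-houses and padding houses. The guiding intuition is that an envy-free agent $a_v$ who receives utility $0$ must have no positive-utility house for $v$ in the range of $\phi$; letting $T_\phi := \{r : h_r \in \phi(A)\}$, this translates to $T_\phi \subseteq N_G(v)$, which is exactly the defining biclique condition.

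For Property 1, starting from a biclique $(S, T)$ of size $k$ in $G$, I would map each $a_v$ with $v \in S$ to a distinct $h_r$ with $r \in T$ (possible since $|S| = |T|$) and place the remaining $|L| - |S|$ agents on distinct padding houses. Every assigned house has utility $0$ for such an $a_v$: the houses $\{h_r : r \in T\}$ because $T \subseteq N_G(v)$ by the biclique property, and padding houses by construction. Hence $a_v$ has no envy, and $\val(\phi^*) \geq k$.

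For Property 2, I would partition the $k$ envy-free agents into $S^1$ (those receiving a utility-$1$ house, i.e., assigned to a non-neighbor) and $S^0$ (those receiving a utility-$0$ house). Envy-freeness of each $a_v \in S^0$ forces $T_\phi \subseteq N_G(v)$, so $(S^0, T_\phi)$ forms a biclique in $G$. Moreover, each $a_v \in S^1$ contributes a distinct $R$-house to $T_\phi$, so $|T_\phi| \geq |S^1|$. When $|S^0| \geq \lfloor k/2 \rfloor$, trimming $T_\phi$ to size $\lfloor k/2 \rfloor$ immediately yields the required biclique. The complementary regime, where $|S^1|$ dominates and $|S^0| < \lfloor k/2 \rfloor$, requires the padding houses to be calibrated so that either $T_\phi$ is still large enough to intersect each $N_G(v)$ and force additional type-$S^0$ agents, or the type-$S^1$ agents internally witness a biclique through their assigned non-neighbor houses.

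The main obstacle is closing the case analysis in Property 2 tightly: one must ensure that in every scenario the extracted biclique has size at least $\lfloor k/2 \rfloor$, which is exactly the source of the factor-$2$ loss in the approximation ratio. Designing the auxiliary houses so that the $S^1$-heavy configurations are structurally forced to expose a large biclique, without introducing ``trivial'' envy-free assignments (e.g., ones where no $R$-house is assigned), is the most delicate part of the reduction.
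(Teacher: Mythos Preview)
Your binary-utility construction has a genuine gap that cannot be repaired by tuning the padding houses. Consider an assignment in which $k$ agents are each matched to a non-neighbor $R$-house (utility~$1$). Every such agent is envy-free regardless of what else is assigned, so $\val(\phi)\ge k$. But a matching of size $k$ in the bipartite complement of $G$ carries no biclique information whatsoever: one can have $\opt_{\mbb}(G)=O(1)$ while such a matching has size $\Theta(N)$. Your suggested fallback, that ``the type-$S^1$ agents internally witness a biclique,'' fails because an agent already at utility $1$ imposes no constraint on any other assigned house. Your other fallback, calibrating padding to ``force additional type-$S^0$ agents,'' cannot work either: padding houses have utility $0$ and hence never threaten $S^1$-agents, and whether some $v$ satisfies $T_\phi\subseteq N_G(v)$ is a property of $G$, not of the padding. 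In short, the $|S^0|<\lfloor k/2\rfloor$ regime is not merely delicate; with binary utilities it is unclosable.

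The paper's reduction abandons binary utilities precisely to sidestep this. It sets $u_{a_i}(h_j)=N+j$ when $(b_i,c_j)\notin E$, $u_{a_i}(h_j)=N$ when $(b_i,c_j)\in E$, and $u_{a_i}(h^*_j)=j-1$ on $N$ padding houses with pairwise distinct values. Two consequences follow. First, no two envy-free agents can occupy padding houses (one would envy the other), and an agent on a padding house envies any agent on an $R$-house; hence all envy-free agents sit on $R$-houses. Second, and this is the key idea you are missing, an envy-free agent assigned to $h_{j_\ell}$ has utility at most $N+j_\ell$, so for every assigned $h_{j_{\ell'}}$ with $j_{\ell'}>j_\ell$ envy-freeness forces $(b_{i_\ell},c_{j_{\ell'}})\in E$. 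Sorting the $k$ envy-free agents by house index and taking the first $\lfloor k/2\rfloor$ agents as $S$ and the last $\lfloor k/2\rfloor$ houses as $T$ then yields the biclique directly, with no $S^0/S^1$ case split at all. The graded values $N+j$ are exactly what makes even ``satisfied'' agents impose edge constraints on higher-indexed assigned houses. (Consistently with this, the paper remarks that its separate binary-utility NP-hardness reduction is \emph{not} approximation-preserving.)
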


Before we describe the reduction, let us explain how we can use it to prove Theorem~\ref{thm:max-ef-assignment-hardness}.

\begin{proof}[Proof of Theorem~\ref{thm:max-ef-assignment-hardness}]
Let $\varepsilon > 0$ be any constant, and suppose that there exists a polynomial-time $f(n)$-approximation algorithm $\cA$ for the {\normalfont \scshape Maximum Envy-Free Assignment} problem. 
We can use it to approximate {\normalfont \scshape Maximum Balanced Biclique} on input $G$ as follows:
\begin{itemize}
\item Run the reduction from Lemma~\ref{lem:thm:max-ef-assignment-reduction} to produce an instance $(A, H, \{u_a\}_{a \in A})$ of {\normalfont \scshape Maximum Envy-Free Assignment}.
\item Run $\cA$ on $(A, H, \{u_a\}_{a \in A})$ to get an assignment $\phi$.
\item Run the algorithm described in the second property of the reduction in Lemma~\ref{lem:thm:max-ef-assignment-reduction} on $\phi$ to get a balanced biclique $(S, T)$ in $G$.
\item Let $\beta := 2\left(\frac{1}{\varepsilon} + 1\right)$. Use a brute-force $(|L|+|R|)^{O(\beta)}$ algorithm to enumerate through all subsets of size at most $2\beta$, and consider the largest balanced biclique found.
\item Output the larger of the two bicliques computed in the previous two steps.
\end{itemize}

If $\opt_{\mbb}(G) / f(N) \leq \beta$, then the brute-force step of the algorithm ensures that the output biclique has size at least $\opt_{\mbb}(G) / f(N)$. As a result, we may henceforth assume that $\opt_{\mbb}(G) > f(N) \cdot \beta$. 

Now, from the first property of the reduction, there exists $\phi^*$ such that $\val(\phi^*) \geq \opt_{\mbb}(G)$. Thus, $\cA$ must output $\phi$ satisfying $\val(\phi) \geq \opt_{\mbb}(G) / f(n)$, which is equal to $\opt_{\mbb}(G) / f(N)$ due to the third property of the reduction. Then, the second property of the reduction ensures that our algorithm outputs a balanced biclique $(S,T)$ satisfying
\begin{align*}
|S| = |T| 
= \left\lfloor \frac{\opt_{\mbb}(G)}{2f(N)} \right\rfloor 
&> \frac{\opt_{\mbb}(G)}{2f(N)} - 1 \\
&> \frac{\opt_{\mbb}(G)}{2f(N)} - \frac{\opt_{\mbb}(G)}{f(N) \cdot \beta} = \frac{\opt_{\mbb}(G)}{2f(N) \cdot(1 + \varepsilon)},
\end{align*}
where the second inequality follows from $\opt_{\mbb}(G) > f(N) \cdot \beta$ and the last equality follows from our choice of $\beta$. 
It follows that our algorithm achieves an approximation ratio of $2f(N) \cdot(1 + \varepsilon)$ for {\normalfont \scshape Maximum Balanced Biclique}, as desired.
\end{proof}

To establish Theorem~\ref{thm:max-ef-assignment-hardness}, it therefore remains to prove Lemma~\ref{lem:thm:max-ef-assignment-reduction}.

\begin{proof}[Proof of Lemma~\ref{lem:thm:max-ef-assignment-reduction}]
Given an instance $G = (L, R, E)$ of {\normalfont \scshape Maximum Balanced Biclique} where $L = \{b_1, \dots, b_N\}$ and $R = \{c_1, \dots, c_M\}$, we create one agent $a_i$ for each vertex $b_i\in L$ and one house $h_j$ for each vertex $c_j\in R$. Moreover, we create $N$ additional houses $h^*_1, \dots, h^*_N$. 
(So, in total, there are $N$ agents and $M+N$ houses.)
The utility of each agent $a_i$ is defined by
\begin{align*}
u_{a_i}(h_j) =
\begin{cases}
N + j &\text{ if } (b_i, c_j) \notin E; \\
N &\text{ if } (b_i, c_j) \in E \\
\end{cases}
\end{align*}
for all $j \in [M]$, and
\begin{align*}
u_{a_i}(h^*_j) = j - 1.
\end{align*}
for all $j \in [N]$.

This completes the description of the reduction. It is clear that the reduction runs in polynomial time, and that the third property of the reduction holds. We will now prove the first two properties of the reduction.

\begin{enumerate}
\item Suppose that $\opt_{\mbb}(G) \geq k$, i.e., there exists a balanced biclique in $G$ where each side has $k$ vertices. 
Assume that this biclique consists of the vertices $b_{i_1}, \dots, b_{i_k}$ and $c_{i'_1}, \dots, c_{i'_k}$. Let us consider the following assignment:
\begin{align*}
\phi^*(a_i) =
\begin{cases}
h_{i'_\ell} &\text{ if } i = i_\ell \text{ for some } \ell \in [k]; \\
h^*_i &\text{ otherwise.}
\end{cases}
\end{align*}
Notice that each of $a_{i_1}, \dots, a_{i_k}$ has value $N$ for her own house, and does not value any assigned house more than $N$. As such, the assignment is envy-free for these $k$ agents, so $\val(\phi^*) \geq k$.
\item Suppose that there exists an assignment $\phi$ such that $\val(\phi) \geq k$. We may assume that $k \geq 2$, as otherwise we can simply output $S=T=\emptyset$.

Let $A_{\ef}$ denote the set of agents that are envy-free with respect to $\phi$, so $|A_{\ef}|\ge k$.  We start by showing that $\phi(A_{\ef}) \cap \{h^*_1, \dots, h^*_N\} = \emptyset$. 
Suppose for the sake of contradiction that for some $a \in A_{\ef}$, we have $\phi(a) = h^*_j$ for some $j \in [N]$.
Let $a'$ be another agent in $A_{\ef}$. 
Consider two cases based on whether $\phi(a') \in \{h^*_1, \dots, h^*_N\}$.
\begin{itemize}
\item[Case I:] $\phi(a') = h_{j'}$ for some $j' \in [M]$. In this case, we have $u_{a}(h_{j'}) \geq N > u_{a}(h^*_j)$, so the assignment is not envy-free for $a$.
\item[Case II:] $\phi(a') = h^*_{j'}$ for some $j' \in [N]$. In this case, if $j < j'$, then $a$ would envy $a'$; otherwise, if $j > j'$, then $a'$ would envy $a$.
\end{itemize}

In both cases, we have reached a contradiction, meaning that  $\phi(A_{\ef}) \cap \{h^*_1, \dots, h^*_N\} = \emptyset$. In other words, $\phi(A_{\ef}) \subseteq \{h_1, \dots, h_M\}$. 
Let $k$ elements of $\phi(A_{\ef})$ be $h_{j_1}, \dots, h_{j_k}$ where $j_1 < \cdots < j_k$, and let $a_{i_\ell} = \phi^{-1}(h_{j_\ell})$ for all $\ell \in [k]$. 
Let $S = \{b_{i_1}, \dots, b_{i_{\lfloor k/2 \rfloor}}\}$ and $T = \{c_{j_{k - \lfloor k/2 \rfloor + 1}}, \dots, c_{j_k}\}$, so $|S| = |T| = \lfloor k/2\rfloor$.

It remains to show that $S$ and $T$ together induce a biclique in $G$. 
Consider any $b_{i_\ell} \in S$ and $c_{j_{\ell'}} \in T$; notice that by our choice of $S$ and $T$, it holds that $\ell' > \ell$. 
Now, we have
\begin{align*}
u_{a_{i_\ell}}(\phi(a_{i_\ell})) = u_{a_{i_\ell}}(h_{j_\ell}) \leq N + j_\ell < N + j_{\ell'}.
\end{align*}
Since the house $h_{j_{\ell'}}$ is assigned and agent $a_{i_\ell}$ has utility strictly less than $N+j_{\ell'}$ for her assigned house, in order for her not to envy the owner of house $h_{j_{\ell'}}$, we must have $(b_{i_\ell}, c_{j_{\ell'}}) \in E$. 
\end{enumerate}
Hence, our reduction satisfies the claimed properties.
\end{proof}

Next, we show that even if the agents have binary utilities, maximizing the number of envy-free agents remains computationally hard.
This hardness only relies on the standard assumption P $\ne$ NP.

\begin{theorem}
\label{thm:EF-NP}
The problem of determining whether for a given positive integer $k$, 
there exists an assignment $\phi$ such that 
$\val(\phi) \geq k$, 
is NP-complete even when 
all agents have binary utilities. 
\end{theorem}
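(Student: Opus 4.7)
The membership of this problem in NP is immediate: given an assignment $\phi$ and an integer $k$, one can verify $\val(\phi) \ge k$ in polynomial time by directly checking, for each agent, whether $u_a(\phi(a)) \ge u_a(\phi(a'))$ for every other agent $a'$, and counting how many agents satisfy this.

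For NP-hardness, my plan is to reduce from an NP-hard Boolean satisfiability problem such as \textsc{Not-All-Equal 3-SAT}, exploiting the structural view of binary utilities as a bipartite ``value-$1$'' graph $B = (A, H, E)$ with $(a,h) \in E$ iff $u_a(h) = 1$. Under this view, an agent $a$ is envy-free in $\phi$ iff either $\phi(a) \in N_B(a)$ (she is matched to a value-$1$ house) or $N_B(a) \cap \phi(A) = \emptyset$ (her entire value-$1$ neighbourhood is unassigned). Since maximum bipartite matching is solvable in polynomial time, any hardness must arise from the interaction of these two mechanisms together with the global budget constraint $|\phi(A)| = n$. The reduction I envisage is a variable/clause gadget: for each variable $x_i$ create two ``variable'' agents $v_i^T, v_i^F$ and two ``literal'' houses $T_i, F_i$ with $v_i^T$ valuing only $T_i$ and $v_i^F$ only $F_i$; for each clause $C_j$ create a ``clause'' agent $c_j$ whose value-$1$ houses are exactly the literal houses of $C_j$; and add a carefully chosen number of dummy houses that every agent values at $0$.

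Under this gadget a truth assignment $\tau$ corresponds to placing $v_i^{\tau(x_i)}$ in its literal house and sending the opposite variable agent to a dummy, leaving the complementary literal house unassigned; a clause agent $c_j$ can then either grab an unassigned literal house (i.e., one associated with a false literal of $C_j$) to be matched, or be unenvied precisely when no literal of $C_j$ is satisfied. By tuning the dummy count, the target threshold $k$ (equal to the total number of agents) is met iff one can simultaneously produce such a truth assignment and a conflict-free matching of clause agents to opposite-literal houses, which is designed to coincide with a NAE-satisfying assignment of the input formula.

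The main obstacle will be the converse direction: showing that any assignment achieving $\val(\phi) \ge k$ actually induces such a valid solution. The ``unenvied'' mechanism is quite flexible---for instance, both $v_i^T$ and $v_i^F$ can be envy-free simultaneously if $T_i$ and $F_i$ are both left unassigned, which could create ambiguity in the implicit truth assignment---so I expect the bulk of the work will be a careful double-counting argument on the dummy budget, forcing almost every variable into a well-defined truth state and ruling out ``spurious'' envy-free configurations in which too many agents are made envy-free via the isolated-neighbourhood route. Once this bookkeeping is carried out, the envy-free pattern of the clause agents directly certifies the required satisfiability property, and NP-completeness follows by combining this reduction with NP-membership.
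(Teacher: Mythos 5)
Your NP-membership argument is fine, but the hardness reduction has a fatal structural flaw before any of the deferred ``bookkeeping'' even begins: you set the target threshold $k$ equal to the \emph{total number of agents}. With that choice the question ``is $\val(\phi)\ge k$?'' is exactly ``does a (fully) envy-free assignment exist?'', and the paper itself recalls that \citet{GanSuVo19} solve that decision problem in polynomial time for arbitrary preferences. So no reduction with $k=n$ can establish NP-hardness unless P $=$ NP; the entire difficulty of this theorem lives in thresholds $k<n$, where some agents are allowed to remain envious and the combinatorial question is \emph{which} ones. Your plan never engages with that slack.

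There is also a concrete problem inside the gadget itself, independent of the threshold. You propose that a clause agent $c_j$ certify a satisfied clause by ``grabbing'' an unassigned literal house, say $F_i$ when $x_i$ is set to true. But $v_i^F$ values $F_i$ at $1$ and has been parked on a $0$-valued dummy; the moment $c_j$ occupies $F_i$, agent $v_i^F$ envies $c_j$. So the forward direction already fails to make all agents envy-free, and the ``unenvied because my neighbourhood is unassigned'' mechanism for variable agents is incompatible with clause agents consuming those same houses. Finally, the converse direction---which you correctly identify as the crux---is left entirely as a hope (``once this bookkeeping is carried out''), so even granting the gadget there is no proof here. For contrast, the paper reduces from {\sc Minimum Coverage}: element agents value the houses $h^*_t$ of the sets containing them, set agents $a^*_t$ value only $h^*_t$, and a shortage of $\ell$ dummy houses forces at least $\ell$ of the $h^*_t$ to be assigned; the threshold is $k=|E|+d-q$, so exactly the (at most $q$) covered elements are permitted to be envious. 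That choice of a strictly sub-$n$ threshold, calibrated to the coverage parameter, is the idea your proposal is missing.
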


\begin{proof}
Since for each assignment $\phi$, 
we can compute $\val(\phi)$ in polynomial time, 
our problem is in NP. 
We prove the NP-hardness of our problem 
by reducing from the decision version of 
{\sc Minimum Coverage}. 
In this problem, we are given a finite set $E$ of elements, 
subsets $S_1,S_2,\dots,S_d$ of $E$, 
and positive integers $q, \ell$ such that 
$q \leq |E|$ and $\ell \leq d$; 
the goal is to determine whether there exists 
a subset $I \subseteq [d]$
such that 
$|I| = \ell$ and $|\bigcup_{t \in I}S_t| \leq q$. 
It is known that 
this problem is NP-complete \citep{Vinterbo02}.

Suppose that we are given an instance 
of 
the decision version of 
{\sc Minimum Coverage}. 
Then we construct an instance of our house allocation problem as follows. 

\begin{itemize}
\item
Define 
$A := \{a_e\mid e\in E\} \cup \{a^*_t \mid t \in [d]\}$ 
and $H := \{h^*_t\mid t\in [d]\} \cup 
\{h_t \mid t \in [|E|+ d - \ell]\}$.
\item 
For each element $e \in E$, define the utility 
function $u_{a_e} \colon H \to \{0,1\}$ by 
\begin{equation*}
u_{a_e}(h) := 
\begin{cases}
1 & \mbox{if $h = h^*_t$ such that $e \in S_t$}; \\ 
0 & \mbox{otherwise}. 
\end{cases}
\end{equation*}
\item 
For each integer $t \in [d]$, define the utility 
function $u_{a^*_t} \colon H \to \{0,1\}$ by 
\begin{equation*}
u_{a^*_t}(h) := 
\begin{cases}
1 & \mbox{if $h = h^*_t$}; \\ 
0 & \mbox{otherwise}. 
\end{cases}
\end{equation*}
\item 
Define $k := |E| + d - q$. 
\end{itemize}

($\Rightarrow$) Assume first that there exists a feasible solution $I \subseteq [d]$
to the decision version of {\sc Minimum Coverage}.
We will show that there exists a feasible solution to our house allocation problem. 
Define the assignment $\phi$ as follows. 
\begin{itemize}
\item 
For each integer $t \in I$, let
$\phi(a^*_t) := h^*_t$. 
\item Let $\pi$ be an arbitrary bijection from 
$A \setminus \{a^*_t \mid t \in I\}$ to 
$[|E| + d - \ell]$. 
For each agent $a \in A \setminus \{a^*_t \mid t \in I\}$, 
let $\phi(a) := h_{\pi(a)}$.

\end{itemize}
We claim that 
$\val(\phi) \geq k = |E| + d - q$.  
More precisely, we prove that 
every agent in 
\begin{equation*}
X := \left\{a_e \,\middle|\, e\in \left(E \setminus \bigcup_{t \in I} S_t\right)\right\}
\cup \{a^*_t \mid t \in [d]\}
\end{equation*}
is envy-free in $\phi$; notice that this is sufficient since $|A \setminus X| = |\{a_e\mid e\in \bigcup_{t \in I} S_t\}| \leq q$. 
\begin{itemize}
\item
Let $e$ be an element in $E \setminus \bigcup_{t \in I} S_t$. 
Then since
house $h^*_t$ is unassigned 
for every integer $t \in [d]$ such that
$e \in S_t$, $e$ has value $0$ for all assigned houses, and so $e$ is envy-free in $\phi$. 
\item 
Let $t$ be an integer in $[d]$. 
If $t \in I$, then since 
$u_{a^*_t}(\phi(a^*_t)) = u_{a^*_t}(h^*_t) = 1$, 
$a^*_t$ is envy-free in $\phi$. 
Else, $t \notin I$, and since 
house $h^*_t$ is unassigned, 
$a^*_t$ is again envy-free in $\phi$. 
\end{itemize}
This completes the proof of this direction.

($\Leftarrow$) Next, we prove the opposite direction.
That is, we assume that 
there exists an assignment $\phi$ such that 
$\val(\phi) \geq k$. 
We first prove that in this case, there exists 
an assignment $\sigma$ satisfying the following 
conditions. 
\begin{description}
\item[(A1)]
$\val(\sigma) \geq k$. 
\item[(A2)]
For every integer $t \in [d]$, if 
$\sigma^{-1}(h^*_t) \neq \emptyset$, then 
$\sigma^{-1}(h^*_t) = \{a^*_t\}$. 
\end{description} 

\begin{claim}
\label{claim}
There exists an assignment $\sigma$ satisfying 
{\normalfont (A1)} and {\normalfont (A2)}. 
\end{claim}

\begin{proof}[Proof of Claim~\ref{claim}]
Assume that there exists an 
integer $t \in [d]$ such that 
$\phi^{-1}(h^*_t) \neq \emptyset$ and 
$\phi^{-1}(h^*_t) \neq \{a^*_t\}$. 
Let $\widehat{a}$ be the agent in $A$ such that 
$\phi(\widehat{a}) = h^*_t$. 
Since $\phi(a^*_t) \neq h^*_t$, we have
$u_{a^*_t}(\phi(a^*_t)) = 0$
and
$u_{a^*_t}(h^*_t) = 1$, which implies that 
$a^*_t$ is not envy-free in $\phi$.

Define the assignment $\psi$ by 
\begin{equation*}
\psi(a) := 
\begin{cases}
h^*_t & \mbox{if $a = a^*_t$}; \\
\phi(a^*_t) & \mbox{if $a = \widehat{a}$}; \\
\phi(a) & \mbox{otherwise}. 
\end{cases}
\end{equation*}
Notice that the set of assigned houses in $\psi$ remains the same as in $\phi$.
This implies that 
for every agent $a \in A \setminus \{a^*_t, \widehat{a}\}$, 
$a$ is envy-free in $\phi$ if and only if 
$a$ is envy-free in $\psi$. 
Furthermore, 
since 
$u_{a^*_t}(\psi(a^*_t)) = u_{a^*_t}(h^*_t) = 1$, 
$a^*_t$ is envy-free in $\psi$. 
Thus, we have $\val(\psi) \geq \val(\phi) \geq k$. 
By setting $\phi := \psi$ and 
repeating this procedure, 
we eventually obtain a desired assignment $\sigma$.
\end{proof} 

Let $\sigma$ be an assignment satisfying 
(A1) and (A2) according to Claim~\ref{claim}. 
Define $I$ as the set of integers $t \in [d]$ such that 
$\sigma(a^*_t) = h^*_t$. 
Notice that 
\begin{equation*}
|I| \geq 
|A| - |\{h_t \mid t \in [|E|+d -\ell]\}| =  
|A| - (|E| + d - \ell) = \ell. 
\end{equation*}
For every element $e \in E$, if 
$e \in \bigcup_{t \in I}S_t$, then 
since there exists an integer $t \in [d]$ 
such that $e \in S_t$ and 
$\sigma(a^*_t) = h^*_t$, 
$a_e$ is not envy-free in $\sigma$. 
Thus, since $\val(\sigma) \geq k = |E| + d - q$, we have 
$|\bigcup_{t \in I}S_t| \leq q$. 
This completes the proof. 
\end{proof}

We remark that the {\sc Minimum Coverage} problem---also referred to as {\sc Bipartite Expansion}---is known to be hard to approximate~\citep{LouisRaVe13,KhotSa16}. However, since our reduction in Theorem~\ref{thm:EF-NP} is not approximation-preserving, it does not directly translate into a hardness of approximation for {\normalfont \scshape Maximum Envy-Free Assignment} in the case of binary utilities.

\section{Proportionality}

In this section, we address proportionality.
We show that deciding the existence of a proportional assignment is already NP-hard.
This is in contrast to envy-freeness, where \citet{GanSuVo19} gave an efficient algorithm for deciding whether an envy-free assignment exists.

\begin{theorem}
\label{thm:prop}
Deciding whether a proportional assignment exists in any given instance is NP-complete.
\end{theorem}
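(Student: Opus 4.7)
The plan is to first establish NP membership, which is routine: given an assignment $\phi$, one computes each agent's utility for her own house and the sum of utilities over the $n$ assigned houses, then checks the $n$ proportionality inequalities in polynomial time.

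For NP-hardness, I would reduce from a classical NP-hard problem such as 3-SAT or an exact cover variant, guided by a crucial preliminary observation: with binary utilities, proportionality coincides with envy-freeness (agent $a$ is proportional iff she has value $1$ for her own house or no assigned house gives her value $1$), so it can be decided in polynomial time by the algorithm of Gan, Suksompong and Voudouris. Any reduction must therefore rely on non-binary utilities. The design strategy would be to give each ``decision'' agent a utility profile that concentrates large value on a small option set of houses, so that proportionality forces her to receive a house from that set (otherwise the averaged utility from the other assigned houses would dominate her realized utility). For a 3-SAT reduction, a natural template is to create a variable agent $a_i^v$ with option houses $h_i^T, h_i^F$ (encoding the truth value of $x_i$) and a clause agent for each clause whose utility profile spikes on the variable-houses of its satisfying literals together with a clause-specific house, calibrated so that her proportionality constraint is achievable exactly when at least one literal of her clause is satisfied by the chosen truth assignment. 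Dummy houses, and if necessary dummy agents, would pad out the instance so that $m\geq n$ and so that unchosen option houses can remain unassigned.

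The main obstacle will be the numerical calibration. Since proportionality is a global averaging condition over all $n$ assigned houses, the utilities of houses received by \emph{other} agents feed into each agent's constraint, producing interactions between gadgets that are not present in the envy-freeness or binary-utility settings. I would have to choose utility magnitudes so that each intended constraint tightens to exactly the combinatorial condition I want to encode, avoiding both accidental violations (e.g.\ a clause agent failing proportionality even though her clause is satisfied, because too many of her highly-valued houses were assigned elsewhere) and spurious satisfiability (an assignment that meets every proportionality inequality without corresponding to a valid truth assignment). Verifying both directions of the correspondence, and in particular extracting a satisfying assignment or exact cover from an arbitrary proportional assignment, is where the bulk of the technical work would lie.
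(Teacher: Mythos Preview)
Your high-level plan is in the right spirit and even names exact cover as a candidate source problem, which is what the paper uses. But as written the proposal has a genuine gap: the 3-SAT gadget you sketch is logically inverted. If a clause agent's utility ``spikes on the variable-houses of its satisfying literals'' and she is to receive her clause-specific house, then when the clause is \emph{unsatisfied} none of those literal houses are assigned, her average over assigned houses is small, and proportionality is \emph{easy}; when the clause is satisfied, one or more of her high-value houses are assigned elsewhere, her average rises, and proportionality becomes \emph{harder}. You would need to spike on the falsifying-literal houses instead, and even then the number of falsified literals in a satisfied clause can be $0$, $1$, or $2$, so the calibration you defer is not just tedious bookkeeping but the place where the argument can actually fail.

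The paper's reduction from \textsc{Exact 3-Set Cover} resolves this with two devices you did not anticipate. First, it adds a large block of $T=100(M+N)$ dummy agents who value only a matching block of $T$ dummy houses; since $N+T>3M$ some dummy house must be assigned, and proportionality for the dummy agents then forces all dummy houses to go to dummy agents, pinning down most of the assignment and making each element agent's proportional share a fixed constant plus a controlled correction. Second, the three houses for each set $S_j$ carry utilities $8T$, $6T-j$, $5T+j$ for the three elements of $S_j$, whose sum is $19T$ independent of $j$. Summing the proportionality inequalities over all element agents then collapses to an equality, which forces every individual inequality to be tight; tightness in turn forces each element agent's total over assigned non-dummy houses to equal exactly $19T$, achievable only as $8T+(6T-j)+(5T+j)$ for a single $j$, so assigned houses come in complete triples corresponding to an exact cover. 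This constant-sum trick is the missing idea that makes the calibration go through.
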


\begin{proof}
Membership in NP is clear: given an assignment, we can verify in polynomial time whether it is proportional.
For the hardness, we reduce from the {\normalfont \scshape Exact 3-Set Cover (X3C)} problem, where we are given a universe $V = \{v_1, \dots, v_N\}$ and subsets $S_1, \dots, S_M \subseteq V$, each of size $3$; the goal is to determine whether there exists a set cover of size $k := N / 3$. 
This problem is known to be NP-complete~\citep{GareyJo79}. 

Given an instance of {\normalfont \scshape X3C}, we perform the following reduction. For convenience, we will think of each subset $S_i$ as having ordered elements $e^0_i, e^1_i, e^2_i$.

We create $N$ agents $a_1, \dots, a_N$, each corresponding to an element in the universe, and $T := 100(M + N)$ additional agents $a^*_1, \dots, a^*_T$. Similarly, we create $3M$ houses $h^0_1, h^1_1, h^2_1, \dots, h^0_M, h^1_M, h^2_M$, where $h^0_j, h^1_j, h^2_j$ correspond to the subset $S_j$ for $j\in[M]$, as well as $T$ additional houses $h^*_1, \dots, h^*_T$. 
(So there are $N+T$ agents and $3M+T$ houses in total.)
Let $C := 8T + 8N - 19$.
The utilities for each of the first $N$ agents are defined by
\begin{align*}
u_{a_i}(h_j^\ell) =
\begin{cases}
8T & \text{ if } v_i = e^\ell_j; \\
6T - j & \text{ if } v_i = e^{\ell + 1\Mod 3}_j; \\
5T + j & \text{ if } v_i = e^{\ell + 2\Mod 3}_j; \\
0 & \text{otherwise}
\end{cases}
\end{align*}
and
\begin{align*}
u_{a_i}(h^*_j) = C.
\end{align*}
For the last $T$ agents, their utilities are defined by $u_{a^*_i}(h^\ell_j) = 0$ for all $j,\ell$ and $u_{a^*_i}(h^*_j) = 1$ for all $j$.

This completes the description of the reduction. 
It is clear that the reduction runs in polynomial time and that each utility value can be represented in $O(\log(NM))$ bits. 
We now establish the validity of the reduction.

($\Rightarrow$)
Suppose that there exist $k$ subsets $S_{p_1}, \dots, S_{p_k}$ that cover the entire universe. We may pick the following assignment: for the last $T$ agents, let $\phi(a^*_i) = h^*_i$ for all $i\in [T]$. Then, for each agent $a_i$ with $i\in[N]$, let $\phi(a_i) = h_j^\ell$, where $v_i = e_j^\ell$ belongs to the set cover.

Proportionality is clearly satisfied for the last $T$ agents. 
Furthermore, for each $i\in[N]$, agent~$a_i$ has utility exactly $8T$ for her assigned house, whereas $a_i$'s total utility for the $N+T$ assigned houses is
\begin{align*}
TC + 8T + (6T - j) + (5T + j) = 8T(N + T).
\end{align*}
Hence, proportionality is also satisfied for $a_i$.

($\Leftarrow$)
Suppose that there exists a proportional assignment $\phi$. We will show that the starting instance of {\normalfont \scshape X3C} is a YES instance.

To this end, let us first observe that since $N + T > 3M$, at least one of the houses $h^*_i$ must be assigned in $\phi$. 
From this and the assumption that $\phi$ is proportional for $a^*_1, \dots, a^*_T$, we must have 
\begin{align} \label{eq:dummy-get-dummy}
\phi(\{a^*_1, \dots, a^*_T\}) = \{h^*_1, \dots, h^*_T\}.
\end{align}

Let us denote $P := \phi(\{a_1, \dots, a_N, a^*_1, \dots, a^*_T\})$ and $Q := \phi(\{a_1, \dots, a_N\})$.
Observe that for each $i\in[N]$, \eqref{eq:dummy-get-dummy} implies that
\begin{align*}
\sum_{h\in P}u_{a_i}(h) = TC + \sum_{h\in Q}u_{a_i}(h).
\end{align*}
Furthermore, since $\phi$ is proportional for $a_i$, we have
\begin{align} \label{eq:prop}
u_{a_i}(\phi(a_i)) \geq \frac{TC + \sum_{h\in Q}u_{a_i}(h)}{N + T}.
\end{align}
This implies that $u_{a_i}(\phi(a_i)) \geq \frac{TC}{N + T} > 6T$, where the latter inequality follows from our choice of parameters. As a result, we must have 
\begin{align} \label{eq:assigned-item-util}
u_{a_i}(\phi(a_i)) = 8T.
\end{align} 
Equivalently, this can be stated as 
\begin{align*}
\phi(a_i) = h_{j_i}^{\ell_i} \text{ where } e_{j_i}^{\ell_i} = v_i.
\end{align*}
Next, notice that 
\begin{align*}
\sum_{i \in [N]} \sum_{h\in Q}u_{a_i}(h) &= \sum_{i, i' \in [N]} u_{a_i}(h^{\ell_{i'}}_{j_{i'}}) = \sum_{i' \in [N]} \sum_{i \in [N]} u_{a_i}(h^{\ell_{i'}}_{j_{i'}}) = \sum_{i' \in [N]} 19T = 19TN.
\end{align*}
Summing \eqref{eq:prop} over $i\in [N]$ and plugging in the above relation, we get
\begin{align*}
\sum_{i \in [N]} u_{a_i}(\phi(a_i)) \geq \frac{TCN + 19TN}{N + T} = 8NT.
\end{align*}
From~\eqref{eq:assigned-item-util}, this inequality is an equality and, as a result,~\eqref{eq:prop} must be an equality for all $i\in[N]$ as well. This implies that
\[ 
\sum_{h\in Q}u_{a_i}(h) = 19T
\]
for all $i \in [N]$.

Finally, observe that the only way for each agent  $a_i$ to get a utility of $19T$ from the houses in $Q$ is through $8T, 6T - j$ and $5T + j$ for some $j \in [M]$. From this, we can conclude that if $h_j^\ell$ belongs to $Q$ for some $j, \ell$, then it must be that all of $h_j^0, h_j^1, h_j^2$ belong to $Q$. As a result, $Q$ corresponds to $N/3 = k$ subsets in the {\normalfont \scshape X3C} instance. 
Furthermore, the fact that $u_{a_i}(\phi(a_i)) = 8T$ for all $i \in [N]$ ensures that these subsets form a set cover of the universe. This concludes our proof.
\end{proof}

We remark that the difficulty of deciding the existence of a proportional allocation stems from the fact that unallocated houses are not taken into account in the definition of proportionality.
In particular, if we were to use an alternative definition wherein each agent calculates her proportional share based on her utility for the set of \emph{all} houses, the problem would become solvable in polynomial time, since we would know the desired threshold for every agent and could then check whether a proportional allocation exists by matching.

For binary utilities, envy-freeness and proportionality are equivalent.
Indeed, if an agent has utility $0$ for all $n$ assigned houses, then she is both envy-free and proportional, while if the agent has utility $1$ for at least one assigned house, then envy-freeness and proportionality are both equivalent to the condition that the agent receives a house for which she has utility $1$.
Theorem~\ref{thm:EF-NP} therefore implies the following corollary.

\begin{corollary}
The problem of determining whether for a given positive integer $k$, 
there exists an assignment such that at least $k$ agents are proportional, 
is NP-complete even when 
all agents have binary utilities. 
\end{corollary}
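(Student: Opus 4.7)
The plan is to derive this corollary directly from Theorem~\ref{thm:EF-NP} by strengthening the per-instance equivalence of envy-freeness and proportionality (under binary utilities) noted in the paragraph just before the corollary to a per-agent equivalence in any fixed assignment. Concretely, I would first verify that for any assignment $\phi$ with binary utilities and any agent $a \in A$, agent $a$ is envy-free in $\phi$ if and only if $a$ is proportional in $\phi$. The argument splits into two cases: if $u_a(\phi(a')) = 0$ for every $a' \in A$, then $a$'s utility for her assigned house and her proportional share are both $0$, so both conditions hold trivially; otherwise, some assigned house has value $1$ to $a$, and then envy-freeness reduces to $u_a(\phi(a)) = 1$, while the proportional threshold $\frac{1}{n}\sum_{a' \in A} u_a(\phi(a'))$ lies in the interval $(0,1]$, so proportionality also reduces to $u_a(\phi(a)) = 1$.

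Once this is in place, the number of proportional agents in any assignment equals the number of envy-free agents, so the decision problem of whether some assignment makes at least $k$ agents proportional is identical to the decision problem of whether some assignment makes at least $k$ agents envy-free. NP-hardness then follows immediately from Theorem~\ref{thm:EF-NP}, whose reduction already uses binary utilities, and membership in NP is clear since proportionality of a given assignment can be verified in polynomial time. There is essentially no obstacle here beyond writing the agent-level equivalence cleanly; no new reduction is needed.
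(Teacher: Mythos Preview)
Your proposal is correct and matches the paper's own argument essentially line for line: the paper establishes the same per-agent equivalence of envy-freeness and proportionality under binary utilities via the same two-case analysis, and then invokes Theorem~\ref{thm:EF-NP}. There is nothing to add; your write-up is simply a slightly more explicit version of the paragraph preceding the corollary.
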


\section{Equitability}

Finally, we turn our attention to equitability and show that in contrast to  proportionality, deciding whether an equitable assignment exists can be done efficiently.

\begin{theorem}
\label{thm:equitable}
There is a polynomial-time algorithm that, for any given instance, decides whether an equitable allocation exists.
\end{theorem}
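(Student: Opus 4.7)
The plan is to reduce the problem to $O(m)$ bipartite matching queries. Observe that in any equitable assignment, there is a single common utility value $v$ such that $u_a(\phi(a)) = v$ for every agent $a \in A$. So the algorithm enumerates candidates for $v$ and, for each candidate, tests whether an assignment achieving that common value exists.

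For a fixed candidate value $v$, construct the bipartite graph $G_v = (A, H, E_v)$ where $\{a,h\} \in E_v$ iff $u_a(h) = v$. An equitable assignment with common value $v$ exists iff $G_v$ admits a matching that saturates $A$, since such a matching is precisely an injection $\phi : A \to H$ with $u_a(\phi(a)) = v$ for every $a \in A$. Existence of an $A$-saturating matching in a bipartite graph can be decided in polynomial time by standard algorithms.

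For the set of candidate values, note that the common value $v$ must in particular be attained by agent $a_1$ at her assigned house, so it suffices to try $v \in \{u_{a_1}(h) : h \in H\}$, a set of size at most $m$. The algorithm answers YES iff $G_v$ has an $A$-saturating matching for some such $v$; correctness follows directly from the characterization above, and the total running time is polynomial.

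There is no substantial obstacle: the key structural observation is that equitability pins down a single scalar $v$ shared by all agents, which makes the remaining feasibility question a plain bipartite matching problem. (Note that the analogous reasoning fails for proportionality because the threshold $\frac{1}{n}\sum_{a' \in A} u_a(\phi(a'))$ depends on which houses are assigned, so the feasibility condition for each agent is not a fixed edge set—consistent with the NP-hardness established in Theorem~\ref{thm:prop}.)
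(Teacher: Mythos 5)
Your proposal is correct and follows essentially the same approach as the paper: enumerate candidate common utility values and, for each, check for an $A$-saturating matching in the bipartite graph of agent--house pairs attaining that value. The only difference is a minor optimization---you restrict candidates to agent $a_1$'s $O(m)$ utility values rather than all $O(mn)$ values---which is a valid refinement but does not change the argument.
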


\begin{proof}
We iterate over the values $u_{a_i}(h_j)$ for all $i\in[n]$ and $j\in [m]$.
For such each value $k$, we construct a bipartite graph  $G=(A,H,E)$, where there is an edge between agent $a_i$ and house $h_j$ if and only if $u_{a_i}(h_j) = k$, and compute a maximum matching of the graph.
We return that an equitable assignment exists exactly when the maximum matching has size $n$ for at least one constructed graph.

It is well-known that computing a maximum matching in a bipartite graph can be done in polynomial time, and the number of values $u_{a_i}(h_j)$ is $O(mn)$.
If we find a matching of size $n$, this clearly corresponds to an equitable assignment.
Conversely, if there is an equitable assignment with value $k$, then the assignment gives rise to a matching of size $n$ in the bipartite graph constructed for value $k$.
\end{proof}

\section{Concluding Remarks}

In this paper, we have studied the complexity of computing fair house allocations with respect to envy-freeness, proportionality, and equitability.
We conclude with some questions that remain from our work.
\begin{itemize}
\item What is the best approximation ratio for maximizing the number of envy-free agents under binary utilities in polynomial time?
\item What is the complexity of deciding whether a proportional assignment exists under binary utilities?
\item Define the \emph{inequity} of an assignment $\phi$ as the difference between the highest and lowest utilities in $\phi$. 
What is the complexity of computing an assignment with the smallest inequity?
\end{itemize}

\subsection*{Acknowledgments}

This work was partially supported by JSPS KAKENHI Grant Number
JP20H05795, Japan and by an NUS Start-up Grant.
We thank the anonymous reviewer for valuable feedback.

\bibliographystyle{plainnat}
\bibliography{main}

\end{document}